\title{Manipulating Districts to Win Elections: Fine-Grained Complexity}
\author{Eduard Eiben,\textsuperscript{\rm 1} Fedor V. Fomin,\textsuperscript{\rm 2} Fahad Panolan,\textsuperscript{\rm 3} Kirill Simonov\textsuperscript{\rm 2}\\
\textsuperscript{\rm 1} Department of Computer Science, Royal Holloway, University of London, UK\\
\textsuperscript{\rm 2} Department of Informatics, University of Bergen, Norway\\
\textsuperscript{\rm 3} Department of Computer Science and Engineering, IIT Hyderabad, India\\
eduard.eiben@rhul.ac.uk, \{fedor.fomin, kirill.simonov\}@uib.no, fahad.panolan@gmail.com}
\newtheorem{definition}{Definition}
\newtheorem{claim}{Claim}
\newtheorem{lemma}{Lemma}
\newcommand{\SSS}{\mathcal{S}}
\newcommand{\EEE}{\mathcal{E}}
\newcommand{\Nat}{\mathbb{N}}
\newcommand{\W}{\textsf{W}}
\newcommand{\FPT}{\textsf{FPT}}
\newcommand{\XP}{\textsf{XP}}
\newcommand{\NP}{\textsf{NP}}
\newcommand{\gerrymandering}{\textsc{Gerrymandering}}
\newcommand{\OhOp}[1]{\mathcal{O}\mathopen{}\mathclose\bgroup\left( #1 \aftergroup\egroup\right)}
\DeclareMathOperator{\dist}{dist}
\newcommand{\votes}{v}
\newtheorem{question}{Question}
\begin{document}
\maketitle
 
\begin{abstract}
Gerrymandering   is a practice of manipulating district boundaries and locations in order to achieve  
 a political advantage for a particular party.
Lewenberg, Lev, and Rosenschein [AAMAS 2017]
 initiated the algorithmic study of a geographically-based manipulation problem, where voters must vote at the ballot box closest to them. In this variant of gerrymandering, for a given set of possible locations of  ballot boxes  and  known political preferences of $n$ voters, the task is  to identify locations for  $k$ boxes out of $m$ possible locations to guarantee victory of a certain party in at least $\ell$ districts. Here integers $k$ and $\ell$ are some selected parameter.  
 
It is known that the problem is \NP-complete already for $4$ political parties and prior to our work only heuristic algorithms for this problem were developed. 
We initiate the rigorous study of the gerrymandering problem from the perspectives of parameterized and fine-grained complexity and provide asymptotically matching lower and upper bounds on its computational complexity. We prove that the problem is \W[1]-hard parameterized by $k+n$ and that it  does not admit an $f(n,k)\cdot m^{o(\sqrt{k})}$ algorithm for any function $f$ of $k$ and $n$ only,  unless the Exponential Time Hypothesis (ETH) fails. Our lower bounds hold already for $2$ parties. 
On the other hand, we give an algorithm that solves the problem for a constant number of parties in time $(m+n)^{\OhOp{\sqrt{k}}}$.  
 
\end{abstract}

\section{Introduction}

In 1812, Massachusetts governor Elbridge Gerry immortalized his name by 
signing a bill that created a  mythological salamander shaped district in the Boston area to the benefit of his political party. Since then the term gerrymandering has been used for  the  practice of  establishing a political advantage  by manipulating voting district boundaries. Wikipedia article Gerrymandering contains a lot of notable gerrymandering examples in electoral systems of many countries.  One of possible responses of such manipulations is a more ``rational'' definition of voting districts, for example the system where   voters should always go to a ballot box (or central area) that is closest to them \cite{Terbush:2013yu}.

However,  even in the ``rational'' geographical settings manipulations are still possible. \cite{LewenbergLR17} considered the variant of the gerrymandering problem in which the agent in charge of the design of voting districts has to follow a clear rule: \emph{all voters vote in the ballot box nearest to them.} 
As it was shown by \cite{LewenbergLR17}, even with clear and rational rules, manipulation is possible. 
This brought Lewenberg et al. to the following natural question. 
Is there an efficient algorithmic procedure that would allow an agent, who is obliged to follow the rules, to optimize division for his personal preferred outcome? The good news here is that the existence of such a procedure is highly unlikely, because as it was shown in  \cite{LewenbergLR17}, the problem is \NP-complete already for $4$ political parties.
  
On the other hand, when the number of districts $k$ is small, the problem is trivially  solvable in time $m^k n^{\OhOp{1}}$, where $m$ is the number of all possible locations of ballot boxes and $n$ is the number of voters,  by a brute-force enumerating all possible  locations for $k$ boxes. Thus in the situation when the number of districts is small, there is an efficient procedure for finding an optimal manipulation.  This immediately brings us to the following question which serves as the departure point of our work. 
 
 \begin{question}
	Is it possible to solve the gerrymandering  problem faster than  the brute-force?
\end{question}

To answer this question, we study \gerrymandering{} through the lens of Parameterized Complexity, which is a multivariate paradigm of algorithm analysis introduced by \cite{df99}. In Parameterized Complexity, one measures the performance of algorithms not merely in terms of the size of the input, but also with respect to certain properties of the input or the output (captured by one or several numerical \emph{parameters}, $k$). This gives rise to two notions of tractability, both of which correspond to polynomial-time tractability in the classical setting, when the parameter is constant. First is the class \FPT\ that contains all problems that can be solved in time $f(k)\cdot n^{\OhOp{1}}$ (for some computable function $f$), while the (asymptotically less efficient) class \XP\ contains all problems that can be solved in time $n^{f(k)}$. Aside from these complexity classes, we will make use of the complexity class \W[1], the class of parameterized problems that can be in time $f(k)\cdot n^{\OhOp{1}}$ reduced to \textsc{Independent Set} parameterized by the solution size. \FPT$\neq$\W[1] is the working hypothesis of Parameterized Complexity and it is widely believed that no \W[1]-hard problem is in \FPT. Finally, our more fine-grained lower bound result depends on the Exponential Time Hypothesis (ETH)~\cite{ImpagliazzoPaturi01,LokshtanovMarxSaurabh11},
which states that the satisfiability of {\sc $k$-cnf} formulas (for $k \geq 3$) is not solvable in subexponential-time $2^{o(n)}$, where $n$ is the number of variables in the formula. We refer to the recent books \cite{CyganFKLMPPS15} and \cite{DowneyF13} for more exposition on these complexity notions.

\paragraph{Contribution} \cite{LewenbergLR17} proved that \gerrymandering{$_{plurality}$} (the Gerrymandering problem when the winner at each district is decided by \emph{plurality} voting rule) is \NP-complete, even when the number of candidates $|C|$ is $4$.
We enhance this intractability result by providing a fine-grained complexity of the problem. Also note that our reduction also implies that the problem is NP-complete for any number $|C|\geq 2$ of candidates. 

\begin{restatable}{theorem}{thmlowerbound}\label{thm:lowebound}
For any number $|C|\geq 2$ of candidates, 
\gerrymandering{$_{plurality}$} is \emph{\W[1]-hard} parameterized by $k+n$. Moreover, there is no algorithm solving \gerrymandering{$_{plurality}$} in time $f(k,n)\cdot m^{o(\sqrt{k})}$ for any computable function $f$, unless \emph{ETH} fails. 
\end{restatable}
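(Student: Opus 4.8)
The plan is to reduce from \textsc{Grid Tiling}, which is \W[1]-hard parameterized by the grid dimension $t$ and, unless ETH fails, cannot be solved in time $f(t)\cdot N^{o(t)}$, where $N$ bounds the size of the ground set (see, e.g., \cite{CyganFKLMPPS15}). Recall that an instance is a $t\times t$ array of nonempty sets $S_{i,j}\subseteq[N]\times[N]$, and the question is whether one can pick $s_{i,j}\in S_{i,j}$ from every cell so that the first coordinate is constant along each column and the second coordinate is constant along each row. From such an instance I would build a \gerrymandering{$_{plurality}$} instance with only two parties, $A$ (the party we want to make win) and $B$, in which the number of districts is $k=\Theta(t^{2})$, the number of voters is $n=\OhOp{t^{2}}$, the number of candidate ballot-box locations is $m=\mathrm{poly}(N,t)$, and we ask whether $k$ of those locations can be chosen so that $A$ wins all $\ell=k$ resulting districts.

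For the construction I would work in the Euclidean plane with one \emph{cell gadget} per cell $(i,j)$: a small cluster of $|S_{i,j}|$ candidate locations, one for each pair $(a,b)\in S_{i,j}$, where the location encoding $(a,b)$ is displaced by a strictly monotone amount in the $x$-direction according to $a$ and in the $y$-direction according to $b$, the whole cluster for cell $(i,j)$ being translated far from every other cluster. To each cell gadget I would attach a tiny block of $A$-voters packed so tightly around its cluster that every box chosen inside the cluster has $A$ as the district winner, and that unless exactly one box of the cluster is chosen some voter of the block ends up in a district that $A$ cannot win; together with the budget $k=\Theta(t^{2})$ this forces the solution to select exactly one genuine location per cell. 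To enforce the Grid Tiling constraints I would place, between each pair of horizontally adjacent cells, a constant-size \emph{consistency gadget}: a few voters sitting between the two clusters, together with a $\pm1$ vote padding of the affected district, engineered so that $A$ retains that district precisely when the two selected $x$-offsets are equal --- any offset difference sends a consistency voter to the wrong side of the nearest-box partition and flips the district to $B$, while an exact tie is resolved in $A$'s favor. Vertically adjacent cells are treated symmetrically through the $y$-coordinate.

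Correctness is then routine: a Grid Tiling solution yields the $k$ boxes and, by the monotone encoding and the consistency gadgets, $A$ wins all $k$ districts; conversely, budget tightness with the attached $A$-blocks forces one genuine box per cell, and all consistency gadgets being won forces column and row agreement, producing a Grid Tiling solution. Since $k=\Theta(t^{2})$ and $n=\OhOp{t^{2}}$, \W[1]-hardness parameterized by $k+n$ follows at once from that of \textsc{Grid Tiling} parameterized by $t$; and an $f(k,n)\cdot m^{o(\sqrt{k})}$ algorithm would solve \textsc{Grid Tiling} in time $f'(t)\cdot\mathrm{poly}(N,t)^{o(t)}=f'(t)\cdot N^{o(t)}$, contradicting ETH. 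To handle an arbitrary $|C|\ge2$ rather than exactly two candidates I would add $|C|-2$ further parties receiving no voters anywhere; interpreted as a classical decision problem, the same construction also gives \NP-completeness for every $|C|\ge2$.

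The step I expect to be the real obstacle is the geometric realization of these gadgets under three simultaneous constraints: (a) every district's vote margin must be so tightly controlled that a single misplaced box flips exactly the intended district and perturbs no other; (b) only two parties are available, so there is no ``coloring'' freedom to insulate gadgets, and all separation must come from the nearest-box geometry together with $\pm1$ margins; and (c) all coordinates must be rationals of polynomial bit-length so that distances can be compared exactly (or perturbed to break unintended ties). The most delicate point is keeping $n=\OhOp{t^{2}}$: the $N$ possible values of a cell have to be encoded entirely in the positions of that cell's candidate boxes, so each consistency check must be performed with $\OhOp{1}$ voters rather than with a block of $\Theta(N)$ threshold voters --- which is exactly what forces the bisector-style design sketched above.
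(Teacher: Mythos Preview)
Your high-level plan---reduce from \textsc{Grid Tiling}, realize each cell as a cluster of candidate boxes encoding the pairs of $S_{i,j}$, and enforce row/column consistency via carefully placed border voters whose nearest box switches sides exactly when the chosen offsets disagree---is precisely the paper's approach, including the mirroring of adjacent small grids so that equal offsets make the perpendicular bisector fall between the two border voters. Where you diverge is in insisting on $n=\OhOp{t^{2}}$ voters, and this self-imposed constraint is what creates the ``real obstacle'' you describe in your last paragraph.

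The theorem does not require $n$ to be polynomial in $t$; for \W[1]-hardness parameterized by $k+n$ and for the ETH bound $f(k,n)\cdot m^{o(\sqrt{k})}$, it suffices that $n$ be bounded by \emph{any} computable function of $t$. The paper exploits this freedom and uses $n=2^{\OhOp{t^{2}}}$ voters: in cell $(i,j)$ it places $5^{(i-1)t+(j-1)}$ copies of the $B$-voter at each of the four border positions and $4\cdot 5^{(i-1)t+(j-1)}+1$ copies of the $A$-voter at the centre. Because the total number of $A$-voters then exceeds the total number of $B$-voters by exactly $k=t^{2}$, pigeonhole forces every one of the $k$ districts to be won by a margin of exactly one. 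The geometric growth of the multiplicities then does the rest: the box receiving $V_R^{i,j}$ cannot absorb any $B$-group from a higher-indexed cell (too many $B$-votes), and if it misses even one of its own four $B$-groups the margin exceeds one and some other box must lose. This single counting argument simultaneously enforces ``exactly one box per cell'' and ``each box gets exactly its own cell's voters'', with no need for the delicate local $\pm 1$ bookkeeping you are worried about. Your sketch of how to achieve the same with $\OhOp{1}$ voters per cell (``some voter of the block ends up in a district that $A$ cannot win'') is not obviously realizable and, more to the point, is unnecessary for the stated theorem.
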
	

The tools we used to obtain lower bounds are based on the grid-tiling technique of \cite{marx-ptaslower}. We refer to the book of \cite{CyganFKLMPPS15} for further discussions of subexponential algorithms. 


Interestingly, the ETH lower bound from Theorem~\ref{thm:lowebound} is asymptotically tight. We give an algorithm whose running time 
matches our lower bound. 

\begin{restatable}{theorem}{thmupperbound}\label{thm:upperbound} For any  constant number of candidates and for any voting rule $g$, 
	$\gerrymandering_g$ is solvable in time $(m+n)^{\OhOp{\sqrt{k}}}$.  
\end{restatable}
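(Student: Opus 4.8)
The plan is to turn the square‑root phenomenon behind Theorem~\ref{thm:lowebound} into a matching divide‑and‑conquer algorithm that exploits the \emph{planarity} of the Voronoi diagram of the chosen ballot boxes.

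First I would pass to a combinatorial core of the problem. Once a set $S$ of $k$ locations out of the given set $P$ of $m$ possible locations is fixed, the district of $p\in S$ is exactly the Voronoi cell of $p$ with respect to $S$, and every voter $v$ casts her ballot in the district of the location of $S$ nearest to $v$; whether a designated candidate wins a district thus depends only on the multiset of ballots of the voters landing in that cell, through a single black‑box call to the voting rule $g$. Hence the algorithm never needs the coordinates, only, for each voter $v$ and each pair $p_i,p_j$, which of the two is closer to $v$; equivalently, we may refine the voters into the $\OhOp{m^4}$ cells of the arrangement of the $\binom m2$ bisectors, so that all voters in one cell share a distance ranking of $P$. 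Because $|C|$ is constant, the ``state'' of a district, or of a partially built one, is the vector of counts of the possible ballots, a vector of dimension $\OhOp{1}$ with entries in $\{0,\dots,n\}$, hence one of $n^{\OhOp{1}}$ possibilities.

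The structural heart is the following. For any $S\subseteq P$ with $|S|=k$, the dual of the Voronoi diagram of $S$, namely the Delaunay graph $T(S)$, is planar on $k$ vertices, so by the planar separator theorem it admits a set $X\subseteq S$ of $\OhOp{\sqrt k}$ \emph{separator boxes} together with a Jordan curve $\gamma$ meeting only the cells of $X$ and splitting the remaining cells into an inside part $A$ and an outside part $B$, each with at most $2k/3$ boxes. The key consequence is \emph{shielding}: a voter strictly inside $\gamma$ has its nearest location of $S$ in $A\cup X$, and one strictly outside has it in $B\cup X$ (if a voter inside $\gamma$ were closest to some $b\in B$ it would lie in the Voronoi cell of $b$, which is outside $\gamma$). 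Therefore the districts of $A$ are determined by $A\cup X$, those of $B$ by $B\cup X$, and the only interaction is through the $\OhOp{\sqrt k}$ boxes of $X$, whose districts straddle $\gamma$. I would therefore solve the generalised task: given a region bounded by a guessed curve, a set $F$ of already‑committed boxes allowed to attract voters of that region, a budget $\kappa$, and a target $\lambda$, decide whether one can place $\kappa$ further boxes inside so that at least $\lambda$ of the districts of boxes placed inside are won by the designated candidate --- the top call being the whole plane with $F=\emptyset$, $\kappa=k$, $\lambda=\ell$. Each call guesses the $\OhOp{\sqrt\kappa}$ boxes of $X$ and the curve $\gamma$ with its cyclic structure, guesses how $\kappa$ and $\lambda$ split between the two sides, records for every box of $X\cup F$ the partial vote vector it has received so far so that its two district halves can be combined at its own level, and recurses on the two sides; base cases ($\kappa=\OhOp{1}$, or no further box to place) are handled by brute force and one call to $g$ per district. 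There are $m^{\OhOp{\sqrt\kappa}}\cdot k^{\OhOp{\sqrt\kappa}}\cdot n^{\OhOp{\sqrt\kappa}}=(m+n)^{\OhOp{\sqrt\kappa}}$ guesses per call, and since the budgets shrink geometrically the external sets and all accumulated data along any root‑to‑leaf path total $\OhOp{\sqrt k}$; solving the recurrence (equivalently, running a dynamic program over an implicitly guessed width‑$\OhOp{\sqrt k}$ separator decomposition of $T(S)$) gives total time $(m+n)^{\OhOp{\sqrt k}}$, matching Theorem~\ref{thm:lowebound}.

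The step I expect to be the main obstacle is making ``guess the separator'' geometrically sound. The curve $\gamma$ separates the \emph{unknown} Voronoi diagram of $S$, so it must be reconstructible from $\OhOp{\sqrt k}$ bits of guessed data, yet it has to genuinely shield --- a convex Voronoi cell can reach around a naive polygon through the separator points, so the polygon on $X$ is \emph{not} in general a valid $\gamma$. Resolving this needs $\gamma$ to be chosen Voronoi‑aware (for instance routed through the Voronoi edges dual to the separating Delaunay cycle) and an argument that just enough of its combinatorial description to classify voters as inside or outside can be recovered from the $\OhOp{\sqrt k}$ boxes fixed in the current and ancestor calls; this, together with the bookkeeping that keeps a straddling district of $X$ from being miscounted across the two sides, is where the real work lies.
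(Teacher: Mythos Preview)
Your plan is essentially the paper's proof: recursive divide-and-conquer on the (unknown) Voronoi diagram of the solution, guessing a balanced separator of size $\OhOp{\sqrt{k}}$, carrying along already-committed boxes $F$ with their partial vote vectors, and splitting $k$ and $\ell$ between the two sides. The recurrence, the shielding argument, and the bookkeeping you describe all match.

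The obstacle you flag at the end---how to guess a separating curve $\gamma$ that genuinely shields and yet is reconstructible from $\OhOp{\sqrt{k}}$ bits---is exactly the point the paper resolves by invoking the noose lemma of Marx and Pilipczuk. The key is that $\gamma$ is taken to be a polygon whose vertices \emph{alternate} between solution boxes and vertices of the Voronoi diagram of the solution, with each segment lying entirely inside one Voronoi cell. This buys two things at once: first, a Voronoi vertex is the circumcentre of three solution boxes, hence determined by a triple of points of $B$, so there are at most $m^3$ candidates for each vertex of $\gamma$ and the whole polygon can be enumerated in $m^{\OhOp{\sqrt{k}}}$ time; second, because every segment of $\gamma$ sits inside a single cell of a box in $Q$, any straight line from a voter on one side to a box on the other side crosses $\gamma$ at a point that is (by validity of the recursive substate) at least as close to some $q\in Q$ as to that box, and the triangle inequality gives shielding immediately. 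The paper enforces this invariant through the recursion by carrying the set $\Delta$ of boundary segments and requiring, as part of ``validity'', that each $\delta\in\Delta$ lie in the Voronoi cell of its committed endpoint $\delta_F$; this is precisely the ``Voronoi-aware routing'' and ``straddling-district bookkeeping'' you anticipated.
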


To obtain our algorithm, we use the machinery developed by \cite{MarxP15}. The main idea is to guess roughly $\sqrt{k}$ many ballot boxes from an optimal solution that separate the plane into two parts such that each part contains between $k/3$ and $2k/3$ ballot boxes of the solution.

\paragraph{Related work} The \gerrymandering{} problem was introduced by  
\cite{LewenbergLR17}. It is an example of general control problem, where a central agent may influence the outcome using its power over the voting process \cite{BartholdiTT92,borodin2018big,FaliszewskiHH10,HemaspaandraHR07,lev2019reverse}. The problem also belongs to the large family of voting manipulation problems, whose computational aspects are widely studied in the computational social choice community, see \cite{ConitzerW16,FaliszewskiHH15,FaliszewskiP10,XiaZPCR09} for further references. In particular, parameterized complexity of manipulation was studied in \cite{DeyMN15,DeyMN16}.
 
\section{Preliminaries}
For an integer $i$, we let $[i]=\{1,2,\dots,i\}$. 
We denote by $\Nat$ the set of natural numbers and by $\mathbb{R}$ the set of real numbers. Given two points $p_1=(x_1,y_1)$, $p_2=(x_2,y_2)$ in the plane $\mathbb{R}^2$, the \emph{distance} between $p_1$ and $p_2$ is $\dist(p_1,p_2)=\sqrt{(x_2-x_1)^2+(y_2-y_1)^2}$. 
\subsection{Voting and \gerrymandering} 

Given a set of \emph{voters} $V = \{v_1, \ldots ,v_n\}$ and a set of \emph{candidates} $C$, the \emph{preference ranking} of voter $v_i$ is a total order (i.e., transitive, complete, reflexive, and antisymmetric relation) $\succ_i$ over $C$. The interpretation of $c_1\succ_i c_2$ is that the voter $v_i$ strictly prefers candidate $c_1$ over candidate $c_2$. We denote by $\pi(C)$ the set of preference rankings for the set of candidate $C$. A \emph{voting rule} is then a function $f: \pi(C)^n\rightarrow C$. An \emph{election} $\EEE_f= (C,V)$ 
is then comprised of a set of voters $V$, a set of candidates $C$, a preference ranking $\succ_i$ over $C$ for each voter and a voting rule $f$.
In this paper, we will only consider voting rules that do not distinguish between voters. That is the outcome of the voting rule only depends on the number of voters with each preference ranking and not their order. An example of such rule is \emph{plurality}, in which for each voter only the topmost candidate counts and the winner is the candidate that got the most votes.

In a \emph{district-based} election $\EEE_{f,g}$ 
voters are divided into disjoint sets $V_1,\ldots, V_s$ such that $\bigcup_{i\in [s]}V_i=V$. These sets define elections $\EEE_f^i=(C,V_i, )$. The ultimate winner from amongst the winners of $\EEE_f^i$ is determined by voting rule $g$, which for us will be a \emph{threshold function}.

\begin{definition}[\gerrymandering{} \cite{LewenbergLR17}]
	The input of the \gerrymandering{$_f$}  problem is:
	\begin{itemize}
		\item A set of candidates $C$.
		\item A set of $n$ voters $V = \{v_1, \ldots ,v_n\}\subset \mathbb{R}^2$, where every voter $v_i \in V$ is identified by their location on the plane,
		and a preferred ranking $\succ_i\in \pi(C)$.\footnote{Lewenberg et al.  \cite{LewenbergLR17} also define the weighted problem, where every voter $v$ has weight $w_v$. We focus on the version without weights.} 
		\item A set of $m$ possible ballot boxes $B = \{b_1, \ldots, b_m\} \subset \mathbb{R}^2$.
		\item Parameters $\ell, k \in \mathbb{N}$, such that $\ell \le k \le m$.
		\item A target candidate $p \in C$.
	\end{itemize}
	The task is to decide 
	whether
	there is a subset of $k$ ballot boxes $B' \subseteq B$, such that they define a district-based election, in which every voter votes at their closest ballot box in $B'$, the winner at every ballot box is determined by voting rule $f$, and 
	$p$ wins in at least $\ell$ ballot boxes.
\end{definition}
We assume that there are no two possible ballot boxes and a voter such that the voter is equidistant from the two boxes, meaning that no voter could ever be exactly on the boundary between two districts and so a fixed set of ballot boxes unambiguously defines the partition of voters into districts.

In Figure~\ref{fig.ex}, we give an example of an input of \gerrymandering{$_g$}  and two possible solutions. 
\begin{figure}[h]
	\centering
	\includegraphics[width=.6\columnwidth]{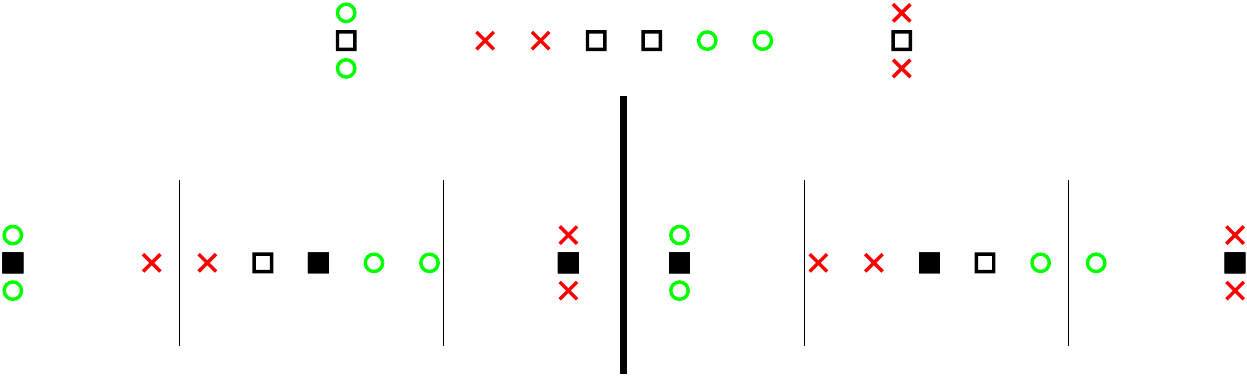}
	\caption{The example of \cite{LewenbergLR17}. The green circle and red cross are voters voting for a different candidates. The squares are ballot boxes. In the bottom are two examples with 3 ballot boxes opened giving a different outcome.}\label{fig.ex}
\end{figure}

\subsection{Voronoi diagrams}

To describe our algorithm, we require the notion of a Voronoi diagram. For a finite set of points $P$ on the plane, \emph{the Voronoi region} of a point $p \in P$ is the set of those points of the plane that are closer to $p$ than to any other point in $P$. We may also use \emph{the Voronoi cell} as a synonym.
A Voronoi region is convex and its boundary is a polyline consisting of segments and possibly infinite rays, since the Voronoi region of $p \in P$ is essentially the intersection of $|P| - 1$ half-planes. \emph{The Voronoi diagram} of $P$ is the tuple of the Voronoi regions of all the elements of $P$.

The Voronoi diagram could be considered a planar graph. The faces of the graph correspond to the Voronoi regions, the edges correspond to the segments which form the boundaries of the individual regions, and the vertices are the endpoints of these segments. The technical details of this correspondence can be found in \cite{MarxP15}.

Voronoi diagrams appear naturally in \gerrymandering\ problem, since when the set $S$ of ballot boxes chosen is fixed, the Voronoi diagram of $S$ defines exactly which regions of the plane are assigned to vote in a particular ballot box.

\section{Hardness}

To obtain our ETH lower bound and prove \W[1]-hardness of \gerrymandering, we reduce from the following \W[1]-hard problem.
\begin{definition}[\textsc{Grid Tilling}~\cite{marx-ptaslower}]\label{def:grid_tiling}
	The input of the \textsc{Grid Tiling} problem is: 
	\begin{itemize}
		\item Integers $k,n\in \Nat$, and a collection $\SSS$ of $k^2$ nonempty sets $S_{i,j}\subseteq [n]\times [n] (1\le i,j\le k)$.
	\end{itemize}
The task is to decide whether there is a set of pairs $s_{i,j}\in S_{i,j}$, for each  $1\le i,j\le k$, such that: 
\begin{itemize}
	\item If $s_{i,j}=(p,q)$ and $s_{i+1,j}=(p',q')$, then $p=p'$.
	\item If $s_{i,j}=(p,q)$ and $s_{i,j+1}=(p',q')$, then $q=q'$.
\end{itemize}
\end{definition}

\begin{lemma}[\cite{marx-ptaslower}, see also Theorem 14.28 in \cite{CyganFKLMPPS15}]
	\textsc{Grid Tiling} is \emph{W[1]-hard} parameterized by $k$ and, unless \emph{ETH} fails, it has no $f(k)n^{o(k)}$-time algorithm for any computable function $f$.
\end{lemma}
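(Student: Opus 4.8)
The plan is to prove this by a parameterized reduction from \textsc{Multicolored Clique} (the \textsc{Partitioned Clique} problem): given a graph $G$ whose vertices are partitioned into color classes $V_1,\dots,V_k$ with $|V_i|=n$, decide whether $G$ has a clique containing exactly one vertex of each class. This problem is \W[1]-hard parameterized by $k$ and, unless ETH fails, has no $f(k)\,n^{o(k)}$-time algorithm; since the \textsc{Grid Tiling} instance I produce will have grid parameter exactly $k$ and universe size exactly $n$, both lower bounds transfer verbatim.

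Fix a bijection between each $V_i$ and $[n]$. Given $G$, construct the $k\times k$ \textsc{Grid Tiling} instance over $[n]$ by putting on the diagonal $S_{i,i}=\{(a,a)\mid a\in[n]\}$ and, for $i\neq j$, putting $S_{i,j}=\{(b,a)\mid a,b\in[n],\ \text{the }a\text{-th vertex of }V_i\text{ is adjacent in }G\text{ to the }b\text{-th vertex of }V_j\}$. (If some $S_{i,j}$ with $i\neq j$ is empty, then $G$ has no multicolored clique, so instead output a fixed trivial no-instance of \textsc{Grid Tiling}; this keeps all cells nonempty as required by the definition.) The key structural observation for correctness is that in any feasible tiling all cells of a column $j$ share their first coordinate --- call it $c_j$ --- and all cells of a row $i$ share their second coordinate --- call it $r_i$ --- so necessarily $s_{i,j}=(c_j,r_i)$. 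The diagonal constraint $(c_i,r_i)\in S_{i,i}$ forces $c_i=r_i=:v_i$ for every $i$, and then for $i\neq j$ the membership $s_{i,j}=(v_j,v_i)\in S_{i,j}$ holds precisely when the $v_i$-th vertex of $V_i$ is adjacent to the $v_j$-th vertex of $V_j$. Hence feasible tilings correspond exactly to choices $v_1\in V_1,\dots,v_k\in V_k$ that are pairwise adjacent, i.e., to multicolored cliques; the converse direction reads the same correspondence backwards by setting $s_{i,j}:=(v_j,v_i)$ and checking all cell memberships.

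Since the reduction is computable in polynomial time and outputs an instance with grid parameter $k'=k$ and universe size $n'=n$, an algorithm for \textsc{Grid Tiling} running in time $f(k')\,(n')^{o(k')}$ would solve \textsc{Multicolored Clique} in time $f(k)\,n^{o(k)}$, contradicting ETH; likewise an \FPT\ algorithm would contradict \W[1]-hardness. I expect the only point requiring care to be the coordinate bookkeeping --- keeping straight which coordinate is equalized along rows versus columns, so that the off-diagonal sets encode the edge relation in the correct orientation and remain mutually consistent under the symmetry of adjacency (the set $S_{j,i}$ being the transpose of $S_{i,j}$) --- after which the verification of both directions is entirely routine.
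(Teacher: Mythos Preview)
The paper does not give its own proof of this lemma; it is quoted as a known result from \cite{marx-ptaslower} (and Theorem~14.28 of \cite{CyganFKLMPPS15}) and used as a black box for the subsequent hardness reduction. Your proposal supplies precisely the standard proof that appears in the cited textbook: a parameter-preserving polynomial-time reduction from \textsc{Multicolored Clique}, with diagonal cells $S_{i,i}=\{(a,a)\}$ forcing the row value and column value to coincide, and off-diagonal cells encoding adjacency so that a feasible tiling $s_{i,j}=(v_j,v_i)$ corresponds exactly to a multicolored clique $\{v_1,\dots,v_k\}$. The coordinate bookkeeping is correct (first coordinates agree along each fixed $j$, second coordinates along each fixed $i$, matching Definition~\ref{def:grid_tiling}), the handling of empty $S_{i,j}$ as an automatic no-instance is sound, and the transfer of both the \W[1]-hardness and the $f(k)n^{o(k)}$ ETH lower bound is immediate since the new parameter and universe size equal the old ones. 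In short, your argument is correct and is the same one the paper implicitly invokes via its citation.
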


\begin{figure}[h]
	\centering
	\includegraphics[width=.6\columnwidth]{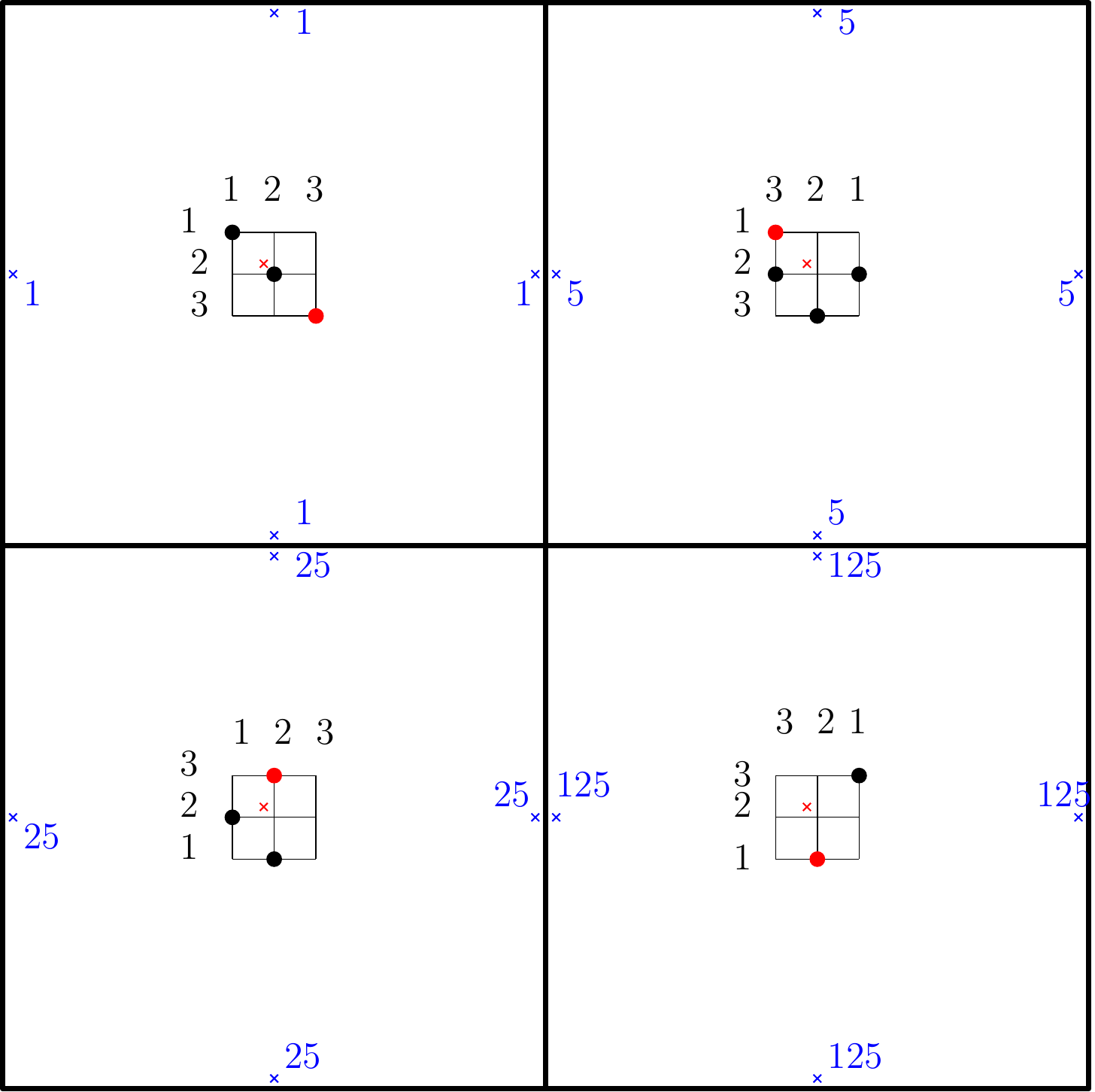}
	\caption{Example of a hardness reduction from Grid Tiling for  $n=3$ and $k=2$. The sets are $S_{1,1}=\{(1,1),(2,2),(3,3)\}$, $S_{1,2}=\{(1,3),(2,1),(2,3), (3,2)\}$, $S_{2,1}=\{(1,2),(2,1),(3,2)\}$, and $S_{2,2}=\{(1,2),(3,1)\}$. The voters are denoted by $\times$, where our candidate is red and the other candidate is blue. The multiplicity of red voters in each cell is $1$ more than the sum of the blue voters in the same cell. The ballot boxes are depicted as $\bullet$. A solution to an instance is the set of red ballot boxes. 
	}\label{fig:hardness}
\end{figure}

\paragraph{Overview of the Proof.} Before we give a formal proof of Theorem~\ref{thm:lowebound}, we first give an informal description of our reduction from \textsc{Grid Tiling} problem that excludes some tedious computations that are necessary for the proof to work, but are not important for understanding where the difficulty of the problem lies. The main idea is to put all voters and ballot boxes inside large $k\times k$ grid such that the cell $(i,j)$ of this grid correspond to the set $S_{i,j}$. In the center of each cell is one much smaller $n\times n$ grid corresponding to the set $S_{i,j}$ (see Figure~\ref{fig:hardness}). The ballot boxes are placed on appropriate positions inside small grid, with a small catch -- for two neighboring cells of large grid, the small grids are ``mirrored''. This way, if we select precisely the same pair $(p,q)$ for each set $S_{i,j}$, then the voters inside each large cell vote in the ballot box inside the same cell. 

We will place a number of the ``other candidate'' voters, determined by a calculation later in the proof, very close to the center of each of 4 borders of the cell (see Figure~\ref{fig:hardness}). By selecting the right distances, for the size of the large grid, size of the small grid, and distance of the ``other candidate'' voters to the border of the cell, we can force that if we select ballot boxes corresponding to $s_{i,j}=(p,q)\in S_{i,j}$ and $s_{i+1,j}=(p',q')\in S_{i+1,j}$ (resp. $s_{i,j+1}=(p',q')\in S_{i,j+1}$), then the voters near the border between cells corresponding to $S_{i,j}$ and $S_{i+1,j}$ (resp. $S_{i,j+1}$) vote inside their corresponding cells if and only if $p=p'$ (resp. $q=q'$). 

Finally, we place in the center of each large cell the number of the ``our candidate'' voters that is just 1 larger than the number of the ``other candidate''  voters inside the cell. As we placed ``our candidate'' voters on $k^2$ different positions, to win $k^2$ ballot boxes, in each box in a solution has to vote all voters from precisely one of these positions. Furthermore, by selecting the number of voters in each cell to grow exponentially, we get that the only way to win all ballot boxes is if all voters in the same cell vote in the same ballot box. This is only possible if we choose in each cell precisely one ballot box and if the corresponding selection of pairs in $S_{i,j}$'s is actually a solution to the original \textsc{Grid Tiling} instance. 

In the remaining of the section we formalize above intuition by assigning the voters and ballot boxes specific positions in the plane and computing distances between positions of ballot boxes and ``other candidate'' voters. 
\thmlowerbound*
\begin{proof}
	
	\newcommand{\redC}{\text{``red''}}
	\newcommand{\blueC}{\text{``blue''}}
	
	We will prove the theorem by a reduction from \textsc{Grid Tiling}. Let $k,n\in \Nat$ be integers and $\SSS$ be a collection of $k^2$ nonempty sets $S_{i,j}\subseteq [n]\times [n] (1\le i,j\le k)$. We will construct an instance $I = (C,V,B,k^2,k^2,\redC)$ of \gerrymandering{$_{plurality}$} with two candidates \redC{} and \blueC{}, 
	$2^{\OhOp{k^2}}$ voters and $\OhOp{k^2n^2}$ ballot boxes and all voters and boxes positioned at the integral points inside $[\OhOp{k\cdot n^2}]\times [\OhOp{k\cdot n^2}]$ grid.


	\noindent\textbf{Ballot boxes.}
	Let $(p,q)\in S_{i,j}$, we let $x_{(p,q)}^{i,j}= (10n^2+4n)(i-1)+5n^2+4p$ if $i$ is even and $x_{(p,q)}^{i,j}= (10n^2+4n)(i-1)+5n^2+4(n-p)$ if $i$ is odd. Similarly, we let $y_{(p,q)}^{i,j}= (10n^2+4n)(j-1)+5n^2+4q$ if $j$ is even and $y_{(p,q)}^{i,j}= (10n^2+4n)(j-1)+5n^2+4(n-q)$ if $j$ is odd. We then identify the pair $(p,q)\in S_{i,j}$ with a ballot box $b_{(p,q)}^{i,j}=(x_{(p,q)}^{i,j}, y_{(p,q)}^{i,j})$. Finally, we will denote by $B_{i,j}$ the set of ballot boxes associated with pairs in $S_{i,j}$.
	
	\noindent\textbf{Voters.} For each pair of $1\le i,j\le k$ we will place $5^{(i-1)\cdot k + (j-1)}$ \blueC\ voters at each of the following four positions: 
	$((10n^2+4n)(i-1)+1, (10n^2+4n)(j-1)+5n^2+2n)$, $((10n^2+4n)(i-1)+5n^2+2n, (10n^2+4n)(j-1)+1)$, 
	$((10n^2+4n)i-1, (10n^2+4n)(j-1)+5n^2+2n)$, and 
	$((10n^2+4n)(i-1)+5n^2+2n, (10n^2+4n)j-1)$. We denote the sets of voters at the above four positions by $V_{\leftarrow}^{i,j}$, $V_\uparrow^{i,j}$, $V_\rightarrow^{i,j}$, and $V_\downarrow^{i,j}$, respectively. And we will place $4\cdot5^{(i-1)\cdot k + (j-1)}+1$ \redC\ voters at the position $((10n^2+4n)+5n^2+2n, (10n^2+4n)(j-1)+5n^2+2n)$, we denote the set of these voters $V_R^{i,j}$.
	
	The goal is to assign $k^2$ ballot boxes such that in each ballot box \redC\ has majority.
	
	\paragraph{Correctness.}
	
	We will show that $(k$, $n$, $\SSS)$ is a YES-instance of \textsc{Grid Tilling} if and only if $(C$, $V$, $B$, $k^2$, $k^2$, $\redC)$ is a YES-instance of \gerrymandering{$_{plurality}$}. Let $s_{i,j}$ be a pair in $S_{i,j}$ for all $i,j\in [k]$ and let $B'=\{b_{s_{i,j}}^{i,j}\mid i,j\in [k] \}$. The following two claims show that if $\bigcup_{i,j\in[k]}\{s_{i,j}\}$ is a solution to \textsc{Grid Tilling}, then $B'$ is a solution to \gerrymandering{$_{plurality}$}. Moreover, if $B'$ is a solution such that for all $i,j\in [k]$ the voters in $\bigcup_{\square\in \{R, \leftarrow,\rightarrow,\uparrow,\downarrow\}}$ vote in $b_{s_i,j}^{i,j}$, then $\bigcup_{i,j\in[k]}\{s_{i,j}\}$ is a solution to \textsc{Grid Tilling}. 
	
	
	\begin{figure}[h]
		\centering
		\includegraphics[width=.8\columnwidth]{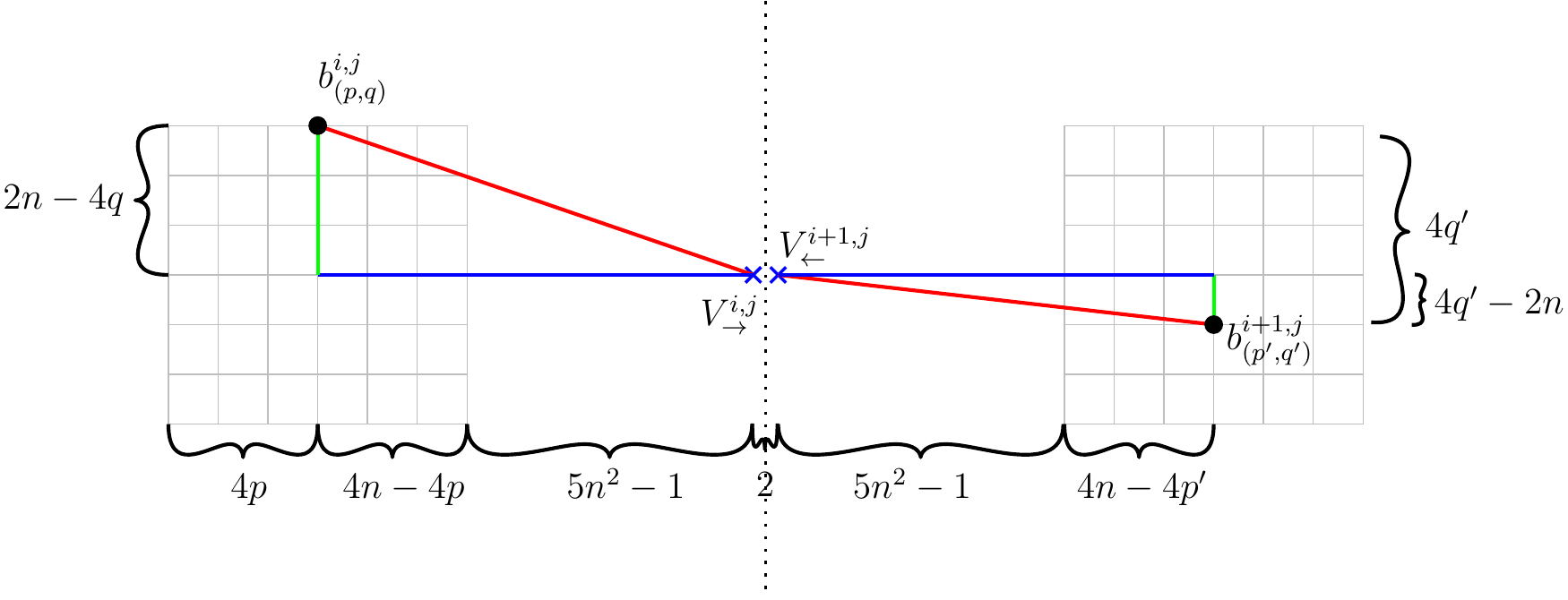}
		\caption{The situation in Claim~\ref{clm:x_axis}. The distance $\dist(b_{(p,q)}^{i,j},V_{\rightarrow}^{i,j})=\sqrt{(5n^2+4n-4p-1)^2+(2n-4q)^2}$. Which is clearly between $5n^2+4n-4p-1$ and $5n^2+4n-4p$. Similarly, $\dist(b_{(p',q')}^{i+1,j},V_{\rightarrow}^{i+1,j})$ is between $5n^2+4n-4p'-1$ and $5n^2+4n-4p'$.
		}\label{fig:hardness2}
	\end{figure}

	\begin{claim}\label{clm:x_axis}
		Let $b_{(p,q)}^{i,j}$ and $b_{(p',q')}^{i+1,j}$ be two ballot boxes. Then for every voter $v\in V_\rightarrow^{i,j}$ it holds $\dist(v,b_{(p,q)}^{i,j})< \dist(v,b_{(p',q')}^{i+1,j})$
		if and only if $p\le p'$. Similarly, for every $v'\in V_\leftarrow^{i+1,j}$ it holds
		$\dist(v',b_{(p',q')}^{i+1,j}) < \dist(v',b_{(p,q)}^{i,j})$
		if and only if $p\ge p'$. 
	\end{claim}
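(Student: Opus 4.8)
The plan is to prove the claim by an explicit computation in coordinates; the point is that the horizontal offset of $v$ from each of the two ballot boxes already decides which one is closer, the vertical offsets being too small to interfere. Write $D := 10n^2+4n$ for the side length of one large cell. First I would record the relevant coordinates: every voter of $V_\rightarrow^{i,j}$ sits at the single point $(Di-1,\ D(j-1)+5n^2+2n)$, the midpoint of the right edge of cell $(i,j)$; the ballot box $b_{(p,q)}^{i,j}$ has $x$-coordinate $D(i-1)+5n^2+4p$ or $D(i-1)+5n^2+4(n-p)$ according to the parity of $i$, and $b_{(p',q')}^{i+1,j}$ has $x$-coordinate $Di+5n^2+4p'$ or $Di+5n^2+4(n-p')$ according to the parity of $i+1$; and since all three points lie in the same row $j$ and the inner $n\times n$ grid has diameter $\OhOp{n}$, the $y$-coordinate of each of the two boxes differs from that of $v$ by at most $2n$ in absolute value.

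Next I would introduce $a$ and $b$ for the horizontal distances from $v$ to $b_{(p,q)}^{i,j}$ and to $b_{(p',q')}^{i+1,j}$. Subtracting $x$-coordinates in each parity case shows that $a$ is an affine function of $p$ and $b$ an affine function of $p'$, that both are integers and at least $5n^2-1$, and that $a-b$ equals $4(p-p')-2$ or $4(p'-p)-2$ depending on the parity of $i$; this is exactly the difference drawn in Figure~\ref{fig:hardness2}. The key estimate is $4n^2 < 2a+1$, which follows at once from $a\ge 5n^2-1$; combined with the vertical bound it gives
\[
a \;\le\; \dist\!\left(v, b_{(p,q)}^{i,j}\right) \;=\; \sqrt{a^2 + (\text{at most }2n)^2} \;\le\; \sqrt{a^2+4n^2} \;<\; a+1,
\]
so $\dist(v, b_{(p,q)}^{i,j})$ lies in the half-open unit interval $[a,a+1)$, and the identical estimate places $\dist(v, b_{(p',q')}^{i+1,j})$ in $[b,b+1)$.

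It then remains to compare. Since $a,b$ are integers and $a-b$ is $4(p-p')-2$ or $4(p'-p)-2$ --- in particular $a\ne b$ and $|a-b|\ge 2$, as $4(p-p')=2$ has no integer solution --- the intervals $[a,a+1)$ and $[b,b+1)$ are disjoint, the one starting at $\min(a,b)$ lying entirely below the other. Hence $\dist(v, b_{(p,q)}^{i,j}) < \dist(v, b_{(p',q')}^{i+1,j})$ if and only if $a < b$, and substituting the value of $a-b$ turns this into the stated inequality between $p$ and $p'$. The second half of the claim, about $v'\in V_\leftarrow^{i+1,j}$, follows from the same three steps with the two cells (and $p,p'$) interchanged, which is exactly what turns ``$\le$'' into ``$\ge$''; the two halves together force $p=p'$, the horizontal constraint of \textsc{Grid Tiling}.

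The main obstacle here is not conceptual but bookkeeping: one must track the parity of $i$ (and, harmlessly, of $j$, which only shifts the irrelevant vertical term), and one must check that the constants chosen in the construction --- the offset $5n^2$, the large-cell spacing, and the gap $4$ between consecutive inner-grid coordinates --- are large enough relative to $2n$ for the unit-interval confinement to hold and for the two intervals to be genuinely separated. Making all of these fit simultaneously (here, in the analogous vertical claim, and in the later claims counting votes) is precisely what the specific numerical choices in the reduction buy.
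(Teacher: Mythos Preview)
Your proposal is correct and follows essentially the same approach as the paper: both confine each of the two distances to a half-open unit interval around the horizontal displacement (your $[a,a+1)$ and $[b,b+1)$ are the paper's $[d_{\min}^1,d_{\min}^1+1)$ and $[d_{\min}^2+2,d_{\min}^2+3)$), using the estimate $\sqrt{x^2+(2n)^2}<x+1$ valid because $x\ge 5n^2-1$, and then compare the intervals. The only organizational difference is that the paper fixes one parity pair ($i,j$ even) and declares the other three ``analogous'', whereas you carry both parities of $i$ through the formula $a-b\in\{4(p-p')-2,\,4(p'-p)-2\}$; be aware that this makes the literal direction of the inequality in the claim depend on the parity of $i$, which is harmless for the reduction (only the conjunction forcing $p=p'$ matters) but should be flagged when you write it out in full.
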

\begin{proof}
	Let us assume that both $i$ and $j$ are even. The remaining $3$ cases are analogous. Then $b_{(p,q)}^{i,j}$ is at position $((10n^2+4n)(i-1)+5n^2+4p, (10n^2+4n)(j-1)+5n^2+4q)$ and $b_{(p',q')}^{i+1,j}$ is at position $((10n^2+4n)i+5n^2+4(n-p'), (10n^2+4n)(j-1)+5n^2+4q')$. Moreover, voters in $V_\rightarrow^{i,j}$ and $V_\leftarrow^{i+1,j}$ are at positions $\left((10n^2+4n)i-1, (10n^2+4n)(j-1)+5n^2+2n\right)$ and $\left((10n^2+4n)i+1, (10n^2+4n)(j-1)+5n^2+2n\right)$, respectively. 
	
	Since both $q$ and $q'$ are numbers between $0$ and $n$, it follows from Pythagorean theorem that the distance between $b_{(p,q)}^{i,j}$ and $V_\rightarrow^{i,j}$ is between $d_{\min}^1=5n^2+4n-4p-1$ and $d_{\max}^1=\sqrt{(d_{\min}^1)^2+(2n)^2}< d_{\min}^1+1$ (see Figure~\ref{fig:hardness2}). For the same reasoning, the distance between $b_{(p,q)}^{i,j}$ and $V_\leftarrow^{i+1,j}$ is between $d_{\min}^1+2$ and $d_{\min}^1+3$, the distance between $b_{(p',q')}^{i+1,j}$ and $V_\leftarrow^{i+1,j}$ is between $d_{\min}^2=5n^2+4n-4p'-1$ and $d_{\min}^2+1$, and finally the distance between $b_{(p',q')}^{i+1,j}$ and $V_\rightarrow^{i,j}$ is between $d_{\min}^2+2$ and $d_{\min}^2+3$. It follows that whenever $p>p'$, then $d_{\min}^2+3< d_{\min}^1$ and the voters in $V_\rightarrow^{i,j}$ are closer to $b_{(p',q')}^{i+1,j}$ than to $b_{(p,q)}^{i,j}$. Similarly, if $p<p'$ then $d_{\min}^1+3< d_{\min}^2$ and the voters in $V_\leftarrow^{i+1,j}$ are closer to $b_{(p,q)}^{i,j}$ than to $b_{(p',q')}^{i+1,j}$.
\end{proof}	

Repeating the same argument for ballot boxes $b_{(p,q)}^{i,j}$ and $b_{(p',q')}^{i,j+1}$, we obtain also the following claim. 

\begin{claim}\label{clm:y_axis}
	Let $b_{(p,q)}^{i,j}$ and $b_{(p',q')}^{i,j+1}$ be two ballot boxes. Then for every voter $v\in V_\downarrow^{i,j}$ it holds $\dist(v,b_{(p,q)}^{i,j})< \dist(v,b_{(p',q')}^{i,j+1})$
	if and only if $q\le q'$. Similarly, for every $v'\in V_\uparrow^{i,j+1}$ it holds
	$\dist(v',b_{(p',q')}^{i,j+1}) < \dist(v',b_{(p,q)}^{i,j})$
	if and only if $q\ge q'$. 
\end{claim}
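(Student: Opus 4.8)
The plan is to derive Claim~\ref{clm:y_axis} from Claim~\ref{clm:x_axis} by the evident symmetry of the construction under exchanging the two coordinate axes. First I would observe that the rule defining $x_{(p,q)}^{i,j}$ from the pair $(i,p)$ is word-for-word the rule defining $y_{(p,q)}^{i,j}$ from the pair $(j,q)$, including the parity-dependent ``mirroring'' that swaps $4p$ for $4(n-p)$; and that, under transposition of the plane, the voter set $V_\downarrow^{i,j}$ is carried to a set occupying, relative to its cell, the same position that $V_\rightarrow^{i,j}$ occupies, and likewise $V_\uparrow^{i,j+1}$ corresponds to $V_\leftarrow^{i+1,j}$. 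Consequently every distance estimate made in the proof of Claim~\ref{clm:x_axis} for a pair of horizontally adjacent boxes $b_{(p,q)}^{i,j}, b_{(p',q')}^{i+1,j}$ transfers verbatim, after this relabelling, to the vertically adjacent pair $b_{(p,q)}^{i,j}, b_{(p',q')}^{i,j+1}$, turning the conclusion ``$p\le p'$ iff $\dots$'' into ``$q\le q'$ iff $\dots$'', which is exactly the statement of Claim~\ref{clm:y_axis}.

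If one prefers a self-contained argument, the steps would be: fix the parities of $i$ and $j$ (say both even, the remaining cases being analogous); write out the explicit coordinates of $b_{(p,q)}^{i,j}$, $b_{(p',q')}^{i,j+1}$, $V_\downarrow^{i,j}$ and $V_\uparrow^{i,j+1}$; and then run the same Pythagorean sandwich as before, now with the \emph{vertical} displacement as the dominant term (of size about $5n^2+4n-4q-1$) and the \emph{horizontal} displacement as the small correction --- the latter equals $|4p-2n|$ or $|4p'-2n|$, hence is at most $2n$, so it contributes less than $1$ to the distance after taking the square root. This confines each of the four relevant distances to an interval of width at most $3$ whose left endpoint is $5n^2+4n-4q-1$ or $5n^2+4n-4q'-1$, up to the fixed additive gap separating $V_\downarrow^{i,j}$ from $V_\uparrow^{i,j+1}$. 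Since moving $q$ or $q'$ by a single unit shifts the dominant term by $4>3$, the comparison of $\dist(v,b_{(p,q)}^{i,j})$ with $\dist(v,b_{(p',q')}^{i,j+1})$ is decided purely by whether $q\le q'$ or $q\ge q'$, which yields both equivalences.

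The step I expect to be the only real (though routine) obstacle is the parity bookkeeping, exactly as in Claim~\ref{clm:x_axis}: for odd $j$ a box coordinate uses $4(n-q)$ in place of $4q$, so in each parity case one must verify that the surviving dominant term still has the shape $5n^2+4n-4\cdot(\text{relevant index})-1$ plus a constant strictly below $4$, so that it is \emph{equality} $q=q'$ --- rather than some reversed inequality --- that keeps the voters near the horizontal border between the cells of $S_{i,j}$ and $S_{i,j+1}$ inside their own cells. Since this verification is structurally identical to the one already carried out for the $x$-axis, the claim is legitimately obtained ``by repeating the same argument''.
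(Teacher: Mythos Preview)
Your proposal is correct and matches the paper's own approach: the paper does not give a separate proof of Claim~\ref{clm:y_axis} but simply states that it follows by ``repeating the same argument'' for the vertically adjacent ballot boxes, i.e., exactly the axis-swap symmetry you describe. Your optional self-contained argument is also just the explicit unrolling of that symmetry and would work for the same reasons as in Claim~\ref{clm:x_axis}.
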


	From the above claims it directly follows that if we start with a YES-instance of \textsc{Grid Tiling}, we end up with a YES-instance of \gerrymandering. 
	
	For the reverse direction, it remains to show that any solution $B'$ to the reduced instance $I$ of \gerrymandering\ has to select exactly one ballot box $b_{(p,q)}^{i,j}$ in each $B_{i,j}$ and that for each such box, the set of voters voting there is precisely the set of voters associated with $(i,j)$. 
	
	First note that there are precisely $k^2$ positions of \redC\ voters. Hence each ballot box in a solution has to be closest to precisely one of these positions. Moreover, there are only $k^2$ more \redC\ voters than \blueC\ voters. Hence, if \redC\ wins in all boxes, then by pigeon-hole principle \redC\ has to win in each ballot box by precisely $1$. Now let $b\in B'$ be a ballot box closest to $V_R^{i,j}$. Since $V_R^{i,j}$ is precisely the set of \redC\ voters that vote in $b$, none of voters in $\bigcup_{\square\in \{\leftarrow,\rightarrow,\uparrow,\downarrow\}}V_\square^{i',j'}$ such that $(i'-1)k+j'-1>(i-1)k+j-1$ can vote in $b$. Now the number of the remaining \blueC\ voters is precisely $5\cdot 5^{(i-1)k+(j-1)}-1=4\cdot \sum_{\ell=0}^{(i-1)k+(j-1)}5^\ell$.
	Therefore, if one group of voters among $V_{\leftarrow}^{i,j}$, $V_\uparrow^{i,j}$, $V_\rightarrow^{i,j}$, and $V_\downarrow^{i,j}$ does not vote at $b$, then \redC\ wins $b$ by more than $1$ vote and by pigeon-hole principle, there will be a ballot box where \redC\ does not win. Finally, it is easy to see that if there is more than one ballot box selected from $B_{i,j}$, then the box where $V_R^{i,j}$ vote cannot take all four groups of voters $V_{\leftarrow}^{i,j}$, $V_\uparrow^{i,j}$, $V_\rightarrow^{i,j}$, and $V_\downarrow^{i,j}$. Therefore, it follows that in each $B_{i,j}$ exactly one ballot box is selected in $B'$ and the set of voters in this ballot box is precisely the set of voters associated with $(i,j)$. It follows from Claims~\ref{clm:x_axis}~and~\ref{clm:y_axis} that the set of pairs associated with any such solution is a solution to \textsc{Grid Tiling}.  
\end{proof}

\section{Algorithm}

Our algorithm uses the machinery developed by Marx and Pilipczuk \cite{MarxP15} for guessing a balanced separator in the Voronoi diagram of a potential solution.
The main idea of our algorithm is as follows. The Voronoi diagram of $S$ could be viewed as a 3-regular planar graph with $k$ faces and $\OhOp{k}$ vertices.
From \cite{MarxP15} we know that for such a graph there exists a polygon $\Gamma$ which goes through $\OhOp{\sqrt{k}}$ faces and vertices and there are at most $\frac{2}{3}k$ faces strictly inside $\Gamma$ and at most $\frac{2}{3}k$ faces strictly outside $\Gamma$. Moreover, $\Gamma$ could be defined by the sequence of $\OhOp{\sqrt{k}}$ elements of $B$. This allows us to guess all possible variants of $\Gamma$ without actually knowing the solution $S$, and for each $\Gamma$ we recurse into two smaller subproblems defined inside and outside of $\Gamma$. Since the separator $\Gamma$ is balanced and the recurrence $T(k) = n^{\OhOp{\sqrt{k}}} T(\frac{2}{3}k)$ solves to $T(k) = n^{\OhOp{\sqrt{k}}}$, we obtain the desired time bound. See Figure~\ref{fig:voronoi} for the illustration.

\begin{figure}[ht]
    \centering
        \begin{tikzpicture}[auto, scale=0.4, node distance=2cm,every loop/.style={},
            base/.style={draw, fill},
            solution/.style={base, rectangle, black, inner sep=1mm},
            box/.style={base, rectangle, black, inner sep=.5mm},
            inside/.style={blue},
            outside/.style={black},
            sep/.style={red},
            border/.style={draw},
            delta/.style={draw, dashed},
            gamma/.style={draw, thick},
            noose/.style={draw, ultra thick, red, dashed}
        ]
    \node[box] at (3.800, 2.800) (20) {};
    \node[box] at (8.400, 16.300) (21) {};
    \node[box, inside] at (15.500, 12.400) (22) {};
    \node[box, inside] at (10.100, 10.000) (23) {};
    \node[box] at (15.500, 16.000) (24) {};
    \node[box] at (18.700, 3.800) (25) {};
    \node[box, inside] at (3.500, 7.900) (26) {};
    \node[box] at (12.400, 0.900) (27) {};
    \node[box, inside] at (9.900, 11.300) (28) {};
    \node[box] at (9.000, 16.400) (29) {};
    \node[box, inside] at (7.300, 4.900) (30) {};
    \node[box, inside] at (12.900, 12.400) (31) {};
    \node[box] at (2.100, 7.000) (32) {};
    \node[box] at (5.600, 13.100) (33) {};
    \node[box] at (4.100, 0.900) (34) {};
    \node[box] at (2.800, 18.400) (35) {};
    \node[box] at (17.800, 4.200) (36) {};
    \node[box] at (13.800, 15.400) (37) {};
    \node[box] at (18.700, 11.000) (38) {};
    \node[box, inside] at (15.400, 8.500) (39) {};
    \node[solution] at (10.600, 2.300) (0) {};
    \draw[border] (9.594, 5.126) -- (8.710, 0.000) -- (10.857, 0.000) -- (12.286, 3.333) -- (12.838, 5.909) -- (9.594, 5.126);
    \node[solution] at (12.700, 1.400) (1) {};
    \draw[border] (12.286, 3.333) -- (10.857, 0.000) -- (13.714, 0.000) -- (12.286, 3.333);
    \node[solution, sep] at (5.600, 10.700) (2) {};
    \draw[border] (3.110, 10.930) -- (5.700, 9.559) -- (6.789, 10.264) -- (7.815, 12.144) -- (7.131, 13.786) -- (2.872, 11.406) -- (3.110, 10.930);
    \node[solution, sep] at (0.600, 6.800) (3) {};
    \draw[border] (0.000, 8.520) -- (0.000, 0.000) -- (1.446, 0.000) -- (4.234, 4.949) -- (3.418, 6.469) -- (0.000, 8.520);
    \node[solution, inside] at (10.500, 12.200) (4) {};
    \draw[border] (15.184, 13.397) -- (8.033, 11.967) -- (9.850, 10.150) -- (13.242, 9.074) -- (15.184, 13.397);
    \node[solution, inside] at (6.700, 9.000) (5) {};
    \draw[border] (5.700, 9.559) -- (5.700, 5.658) -- (6.983, 5.865) -- (7.456, 7.178) -- (7.291, 9.160) -- (6.789, 10.264) -- (5.700, 9.559);
    \node[solution, inside] at (7.800, 9.500) (6) {};
    \draw[border] (6.789, 10.264) -- (7.291, 9.160) -- (9.383, 9.683) -- (9.850, 10.150) -- (8.033, 11.967) -- (7.815, 12.144) -- (6.789, 10.264);
    \node[solution, sep] at (1.800, 8.800) (7) {};
    \draw[border] (0.000, 11.460) -- (0.000, 8.520) -- (3.418, 6.469) -- (3.110, 10.930) -- (2.872, 11.406) -- (2.363, 11.589) -- (0.000, 11.460);
    \node[solution, sep] at (17.400, 9.100) (8) {};
    \draw[border] (15.184, 13.397) -- (13.242, 9.074) -- (13.520, 6.792) -- (17.750, 6.393) -- (17.750, 18.386) -- (15.184, 13.397);
    \node[solution, sep] at (9.200, 8.100) (9) {};
    \draw[border] (9.383, 9.683) -- (7.456, 7.178) -- (6.983, 5.865) -- (9.594, 5.126) -- (12.838, 5.909) -- (13.137, 6.105) -- (13.520, 6.792) -- (13.242, 9.074) -- (9.850, 10.150) -- (9.383, 9.683);
    \node[solution] at (3.700, 14.100) (10) {};
    \draw[border] (2.657, 14.829) -- (2.363, 11.589) -- (2.872, 11.406) -- (7.131, 13.786) -- (7.958, 17.744) -- (2.657, 14.829);
    \node[solution] at (1.500, 14.300) (11) {};
    \draw[border] (0.000, 16.453) -- (0.000, 11.460) -- (2.363, 11.589) -- (2.657, 14.829) -- (0.000, 16.453);
    \node[solution] at (13.400, 1.700) (12) {};
    \draw[border] (12.286, 3.333) -- (13.714, 0.000) -- (16.800, 0.000) -- (13.137, 6.105) -- (12.838, 5.909) -- (12.286, 3.333);
    \node[solution, sep] at (7.700, 2.800) (13) {};
    \draw[border] (4.234, 4.949) -- (1.446, 0.000) -- (8.710, 0.000) -- (9.594, 5.126) -- (6.983, 5.865) -- (5.700, 5.658) -- (4.234, 4.949);
    \node[solution] at (16.900, 3.800) (14) {};
    \draw[border] (13.137, 6.105) -- (16.800, 0.000) -- (19.900, 0.000) -- (19.900, 5.907) -- (17.750, 6.393) -- (13.520, 6.792) -- (13.137, 6.105);
    \node[solution, inside] at (7.900, 9.100) (15) {};
    \draw[border] (7.291, 9.160) -- (7.456, 7.178) -- (9.383, 9.683) -- (7.291, 9.160);
    \node[solution] at (2.600, 16.100) (16) {};
    \draw[border] (0.000, 16.453) -- (2.657, 14.829) -- (7.958, 17.744) -- (8.897, 19.900) -- (0.000, 19.900) -- (0.000, 16.453);
    \node[solution, inside] at (4.700, 9.000) (17) {};
    \draw[border] (3.110, 10.930) -- (3.418, 6.469) -- (4.234, 4.949) -- (5.700, 5.658) -- (5.700, 9.559) -- (3.110, 10.930);
    \node[solution] at (18.100, 9.100) (18) {};
    \draw[border] (17.750, 18.386) -- (17.750, 6.393) -- (19.900, 5.907) -- (19.900, 19.900) -- (18.458, 19.900) -- (17.750, 18.386);
    \node[solution, sep] at (10.400, 12.700) (19) {};
    \draw[border] (7.131, 13.786) -- (7.815, 12.144) -- (8.033, 11.967) -- (15.184, 13.397) -- (17.750, 18.386) -- (18.458, 19.900) -- (8.897, 19.900) -- (7.958, 17.744) -- (7.131, 13.786);

    \draw[noose] (3) -- (3.416, 6.469) -- (7) -- (3.110, 10.930) -- (2) -- (7.815, 12.144) -- (19) -- (15.184, 13.397) -- (8) -- (13.520, 6.792)-- (9)-- (9.594, 5.126) -- (13) -- (4.234, 4.949) -- (3);

        \end{tikzpicture}
        \caption{The Voronoi diagram of a potential solution. The solution is represented by large squares, small squares are the other possible ballot boxes. A separating polygon $\Gamma$ is in red and dashed, the boxes on $\Gamma$ are also in red. Our algorithm guesses $\Gamma$, takes the red boxes in the solution and then solves the inside (blue and red squares) and the outside (black and red squares) separately.}
        \label{fig:voronoi}
\end{figure}
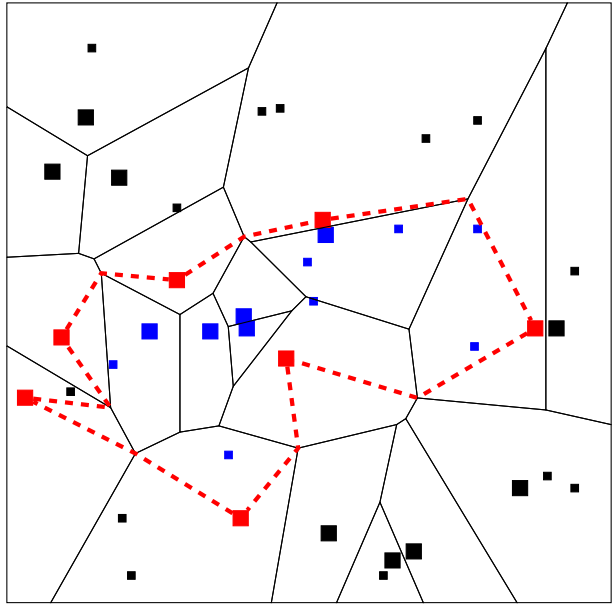

We assume that
no four elements of $B$ lie on the same circle. From this assumption, it follows that no vertex of any Voronoi diagram corresponding to a possible solution has degree more than 3.
We also assume that for any four distinct points in $B$, one is never exactly the center of the unique circle passing through the three others. It means that for any subset $S$ of $B$, the vertices of the Voronoi diagram of $S$ are disjoint from $B$.
Both assumptions are for the ease of presentation and could be achieved by a small perturbation of the coordinates.

Next we state the result of Marx and Pilipczuk about finding small balanced separators in Voronoi diagrams.
\begin{lemma}[\cite{MarxP15}]
    Let $S$ be a set of $k$ points on the plane. Consider the Voronoi diagram of $S$, and the planar graph $G$ associated with it. There is a polygon $\Gamma$ which has length $\OhOp{\sqrt{k}}$ and its vertices alternate between elements of $S$ and vertices of $G$ and each segment of $\Gamma$ lies inside a face of $G$. At most $\frac{2}{3} k$ faces lie strictly inside $\Gamma$, and at most $\frac{2}{3} k$ strictly outside.
    \label{lemma:noose}
\end{lemma}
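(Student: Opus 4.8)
The plan is to reduce the existence of the geometric noose $\Gamma$ to a purely combinatorial cycle-separator statement on the dual of the Voronoi diagram, namely the Delaunay triangulation of $S$. First I would record, using the non-degeneracy assumptions together with Euler's formula, that the plane graph $G$ of the Voronoi diagram is $3$-regular with $k$ bounded faces (one per site), $2k-\OhOp{1}$ vertices, and $3k-\OhOp{1}$ edges. To avoid the pathologies caused by the unbounded cells of convex-hull sites, I would first enclose $S$ in a large bounding box, so that every Voronoi cell becomes a bounded convex face and every pair of adjacent cells shares a genuine segment rather than a ray. The dual graph $G^\ast$ is then the Delaunay triangulation $\mathcal{D}$ of $S$: its vertices are the $k$ sites, its bounded triangular faces correspond to the degree-$3$ Voronoi vertices, and its edges correspond to Voronoi edges. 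In particular, two sites are adjacent in $\mathcal{D}$ precisely when their Voronoi cells share an edge.

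Next I would assign weight $1$ to every vertex (site) of $\mathcal{D}$ and invoke the classical simple-cycle planar separator theorem for $2$-connected plane graphs whose faces have bounded size (Miller; Gazit--Miller), which applies since $\mathcal{D}$ is a triangulation on $k$ vertices. This yields a simple cycle $C=(s_1,s_2,\dots,s_r)$ in $\mathcal{D}$ of length $r=\OhOp{\sqrt{k}}$ such that at most $2k/3$ sites lie strictly inside $C$ and at most $2k/3$ strictly outside. This cycle is the combinatorial skeleton of the noose.

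I would then realize $C$ as a polygon $\Gamma$. For each consecutive pair $s_i,s_{i+1}$ on $C$, their adjacency in $\mathcal{D}$ means the Voronoi cells $f_i,f_{i+1}$ share a Voronoi edge $e_i$; I pick an endpoint $u_i$ of $e_i$, which is a Voronoi vertex incident to both $f_i$ and $f_{i+1}$, and set $\Gamma=s_1u_1s_2u_2\cdots s_ru_rs_1$. Since each Voronoi cell is convex and contains its site in the interior, while $u_{i-1}$ and $u_i$ are both corners of $f_i$, the two segments $u_{i-1}s_i$ and $s_iu_i$ lie inside the single face $f_i$. Thus $\Gamma$ has length $2r=\OhOp{\sqrt{k}}$, its vertices alternate between elements of $S$ and vertices of $G$, and each of its segments lies inside a face, exactly as required.

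It remains to check that $\Gamma$ is a simple closed curve and that it is balanced. Simplicity follows because $\Gamma$ hugs the simple cycle $C$ and, by convexity, never leaves the union of the traversed cells $f_1,\dots,f_r$; a face is strictly inside (resp.\ outside) $\Gamma$ if and only if its site is strictly interior (resp.\ exterior) to $C$, the only exceptions being the $\OhOp{\sqrt{k}}$ cells that $\Gamma$ passes through and the single third cell incident to each routed vertex $u_i$, where the degree-$3$ property bounds the local correction by one cell. Hence at most $2k/3$ faces lie strictly inside and at most $2k/3$ strictly outside, the lower-order $\OhOp{\sqrt{k}}$ corrections being absorbed into the slack of the separator for large $k$, with the finitely many small values of $k$ handled directly. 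I expect the main obstacle to be precisely this last step: converting the edge-crossing dual cycle into a non-self-intersecting polygon that passes through sites and Voronoi vertices, and certifying that rerouting through Voronoi vertices does not spoil the $2k/3$ balance. Both points rest on the convexity of Voronoi cells and on the degree-$3$ property guaranteed by the non-degeneracy assumptions.
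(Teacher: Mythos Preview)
The paper does not prove this lemma; it is quoted verbatim as a result of Marx and Pilipczuk \cite{MarxP15} and used as a black box in the algorithm of Theorem~\ref{thm:upperbound}. There is therefore no ``paper's own proof'' to compare your proposal against.

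That said, your sketch is in the right spirit and is close to what Marx and Pilipczuk actually do: they also pass to a planar structure tied to the Voronoi diagram and invoke a balanced cycle-separator theorem. Your version dualizes to the Delaunay triangulation and applies Miller's simple-cycle separator, then reroutes the Delaunay cycle through Voronoi vertices. The part you flag as the main obstacle is indeed the delicate one: after rerouting through Voronoi vertices, simplicity of the resulting polygon and preservation of the $2k/3$ balance are not automatic. In particular, two consecutive Delaunay edges $s_{i-1}s_i$ and $s_is_{i+1}$ may force the choice of the \emph{same} Voronoi vertex $u_{i-1}=u_i$ (when they bound a common Delaunay triangle), and distinct stretches of $C$ may route through the same Voronoi vertex, so your polygon can degenerate or self-touch; your convexity argument alone does not rule this out. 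Marx and Pilipczuk sidestep this by working directly with nooses in a sphere-cut decomposition of the radial graph rather than by post-hoc rerouting a Delaunay cycle, which is what makes simplicity and the face count go through cleanly. If you want to turn your outline into a full proof, you would need to handle these coincidences explicitly (e.g., by locally perturbing the route inside the shared cell, or by choosing interior points of Voronoi edges rather than endpoints) and then re-argue simplicity.
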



Note that since $\Gamma$ passes through $\OhOp{\sqrt{k}}$ faces, the number of faces which lie non-strictly inside (outside) $\Gamma$ is bounded by $\frac{3}{4} k$ when $k$ is greater than $\gamma$, where $\gamma$ is some constant.

Now we are ready to prove Theorem~\ref{thm:upperbound} which we restate for convenience.
\thmupperbound*
\begin{proof}
Our algorithm is a recursive brute-force search. An input to the recursion step is a particular \emph{state} $R = (V$, $B$, $\ell$, $k$, $F$, $v$, $\Delta)$. A state consists of a set of voters $V$, a set of possible ballot boxes $B$, parameters $\ell$, $k$ such that $\ell \le k \le |B|$, a subset of boxes $F \subset B$, for any $f \in F$ and $\sigma \in \pi(C)$ there is an integer $\votes(f, \sigma)$ from $0$ to $|V|$, and a set of segments $\Delta$, the segments form the boundary of the region containing $V$ and $B$. Each segment $\delta \in \Delta$ has exactly one endpoint in $F$, and we denote this endpoint by $\delta_F$.
The algorithm returns whether the given state is \emph{valid}.
\begin{definition}[Valid state]
    A state is valid if there exists a subset $S \subset B \setminus F$ such that $|S| + |F| = k$, and in the election with voters $V$ and boxes $S \cup F$ the target party $p$ wins in at least $\ell$ boxes of $S$, and for each $f \in F$ and $\sigma \in \pi(C)$ there are exactly $\votes(f, \sigma)$ voters with preference list $\sigma$ voting in box $f$. Additionally, in the Voronoi diagram of $S \cup F$, each segment $\delta$ of $\Delta$ lies completely inside the cell (touching the border) corresponding to the box $\delta_F$.
    \label{def:valid}
\end{definition}
If the state is valid, the algorithm also returns a particular subset $S \subset B \setminus F$ satisfying Definition~\ref{def:valid}.

Intuitively, $V$ and $B$ correspond to a set of voters and possible boxes in the current subtask, $F$ corresponds to a set of boxes which we have already decided to take into the solution on the previous steps of the recursion, and for any box in $F$ the number of voters with a particular preference list is also fixed through $\votes$. Among the boxes of $B \setminus F$ we are supposed to select at most $k - |F|$ to go into the solution, and in at least $\ell$ of them the target party should win. The condition that segments of $\Delta$ lie inside the corresponding cells enforces that these cells actually separate the whole state from the outside of the region defined by $\Delta$.

Initially, to run our algorithm having the input $(C, V, B, \ell, k, p)$ to $\gerrymandering_f$, we proceed as follows. First, find an equilateral triangle $T$ such that $V$ and $B$ are completely inside $T$. Then mirror each vertex of the triangle against the opposing side, and denote the resulting set of three points as $F$.
Consider the Voronoi diagram of $B \cup F$, by construction the outer faces are exactly the cells of $F$, and these faces are disjoint from $B$ and $V$. Construct a polygon $\Delta$ in the following way. Start from a point of $F$ and go to a next clockwise point of $F$ by a sequence of two segments having as the common point a vertex of the diagram on the border between the corresponding outer faces; repeat until the polygon reaches the starting point again.
The initial state is defined as $R_0 = (V$, $B \cup F$, $\ell$, $k + 3$, $F$, $v$, $\Delta)$ where $v(f, \sigma) = 0$ for any $f \in F$ and $\sigma \in \pi(C)$. Since for any $v \in V$ any $b \in B$ is closer to $v$ than any $f \in F$, $R_0$ is valid if and only if $(C, V, B, \ell, k, p)$ is a YES-instance of $\gerrymandering_f$.

\noindent\textbf{The recursive step.} On the given state $R = (V$, $B$, $\ell$, $k$, $F$, $v$, $\Delta)$ the algorithm proceeds as follows. If $k - |F|$ is at most $\gamma$, where $\gamma$ is the constant mentioned below the statement of Lemma~\ref{lemma:noose}, we try all possible $S \subset B$ of size $k - |F|$ and for each check the conditions of validity in polynomial time, since $\gamma = O(1)$ the whole procedure works in polynomial time as well. If $k - |F|> \gamma$ , try all possible polygons $\Gamma$ of the form as in Lemma~\ref{lemma:noose}, that is, alternating between elements of $B$ and potential vertices of the Voronoi diagram of the solution, and having length at most $\alpha \sqrt{k}$ where $\alpha$ is the constant under $\mathcal{O}$ in the lemma. Since a vertex of a Voronoi diagram constructed over any subset of $B$ is equidistant from three elements from this subset and is uniquely determined by these three elements, there are at most $|B|^3$ potential locations for a Voronoi diagram vertex. Therefore there are at most $|B|^{4\alpha \sqrt{k}}$ variants for $\Gamma$. We consider only these $\Gamma$ which do not go out of the region defined by $\Delta$.

For each possible $\Gamma$, we split the instance in two parts. Denote by $Q$ the set of boxes on $\Gamma$. Let $V_1$ ($V_2$) be the set of voters inside (outside) of $\Gamma$, and $B_1$ ($B_2$) be the set of possible boxes inside (outside) of $\Gamma$, $V_1 \cup V_2 = V$, $V_1 \cap V_2 = \emptyset$, $B_1 \cup B_2 = B$. For $i \in \{1, 2\}$, let $F_i = Q \cup (F \cap B_i)$ and let $\Delta_i$ be those segments $\delta$ in $\Delta$ such that $\delta_F \in F_i$.
The fixed preference counts $\votes_1$ and $\votes_2$ for $F_1$ and $F_2$ respectively are defined as follows. For boxes in $F \setminus Q$ the value of $\votes$ is transferred directly to $\votes_1$ or $\votes_2$ depending on to which part the box went. For boxes in $F \cap Q$ and for any preference list $\sigma \in \pi(C)$ we guess how the value of $\votes(f, \sigma)$ is split between $\votes_1(f, \sigma)$ and $\votes_2(f, \sigma)$. For boxes in $Q \setminus F$ we guess separately the value of $\votes_1(f, \sigma)$ and $\votes_2(f, \sigma)$.

Finally, guess how $k + |Q \setminus F|$ is split between the two parts $k_1$ and $k_2$, $k_1$ and $k_2$ must be each at most $\frac{3}{4} k$. Also guess how $\ell - w$ is split between $\ell_1$ and $\ell_2$, where $w$ is the number of boxes in $Q \setminus F$ where the target party $p$ wins if voters from $V_1$ and $V_2$ vote according to $\votes_1$ and $\votes_2$.

Next, we run the recursion on the state $R_1=(V_1$, $B_1$, $\ell_1$, $k_1$, $F_1$, $\votes_1$, $\Delta_1 \cup \Gamma)$ and on the state $R_2=(V_2$, $B_2$, $\ell_2$, $k_2$, $F_2$, $\votes_2$, $\Delta_2 \cup \Gamma)$.
If both $R_1$ and $R_2$ are reported as valid states, we return that $R$ is valid and take $S = S_1 \cup S_2 \cup (Q \setminus F)$. Otherwise, we continue to the next choice of $\Gamma$. If for all choices of $\Gamma$ we do not succeed, we return that $R$ is not valid.


\noindent\textbf{The correctness.} Now we prove the correctness of the algorithm above. The proof is by induction on $k$. In the base case $k - |F| \le \gamma$ the algorithm tries all possible
ways to select the ballot boxes and then checks the definition directly.
In the case $k - |F| > \gamma$, assume first that the algorithm reports a given state $R=(V$, $B$, $\ell$, $k$, $F$, $\votes$, $\Delta)$ as valid. So for some $\Gamma$ the corresponding splitting states $R_1=(V_1$, $B_1$, $\ell_1$, $k_1$, $F_1$, $\votes_1$, $\Delta_1 \cup \Gamma)$
and $R_2=(V_2$, $B_2$, $\ell_2$, $k_2$, $F_2$, $\votes_2$, $\Delta_2 \cup \Gamma)$ are reported as valid by the algorithm, and by induction that means that they are actually valid states since both $k_1$ and $k_2$ are at most $\frac{3}{4}k$. Consider sets $S_1 \subset B_1 \setminus F_1$ and $S_2 \subset B_2 \setminus F_2$ returned by the recursive calls, by induction $S_1$ and $S_2$ also satisfy Definition~\ref{def:valid} on the validity of $R_1$ and $R_2$, respectively. As before, denote $S_1 \cup S_2 \cup (Q \setminus F)$ by $S$. We claim that $S$ satisfies Definition~\ref{def:valid} for $R$, and therefore $R$ is valid.

By construction of $R_1$ and $R_2$, $k = k_1 + k_2 - |Q \setminus F|$, and since $S_1$, $S_2$, and $Q \setminus F$ are pairwise disjoint, the size of $S$ is indeed equal to $k - |F|$. Next, we show that in the election with boxes $S \cup F$ the target party $p$ wins in at least $\ell$ districts of $S$ and that the votes in districts of $F$ are distributed according to $v$.
The next claim shows that for voters in each part the box where they vote is the same in the election with boxes $S_i \cup F_i$ and in the election with boxes $S \cup F$.

\begin{claim}
    For $i \in \{1, 2\}$ and any voter $x \in V_i$, the closest box to $x$ in $S \cup F$ belongs to $S_i \cup F_i$ as well.
\label{claim:closest_preserved}
\end{claim}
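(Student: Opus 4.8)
The plan is to establish Claim~\ref{claim:closest_preserved} by a geometric separation argument, using the fact that $\Gamma$ is a closed polygon each of whose segments lies inside a Voronoi cell of $S \cup F$, and whose vertices alternate between boxes (elements of $Q$) and Voronoi vertices. Fix $i=1$ (the case $i=2$ is symmetric) and take a voter $x \in V_1$, so $x$ lies strictly inside $\Gamma$. Let $b$ be the closest box to $x$ in $S \cup F$; I want to show $b \in S_1 \cup F_1 = S_1 \cup Q \cup (F \cap B_1)$, i.e.\ that $b$ is not a box lying strictly outside $\Gamma$. Suppose for contradiction $b \in B_2 \setminus Q$, so $b$ is strictly outside $\Gamma$. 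The segment from $x$ to $b$ must cross $\Gamma$ at some point $z$; $z$ lies on some segment of $\Gamma$, which by construction lies inside the Voronoi cell of some box $q \in Q$. The key step is then to argue that $\dist(x, q) \le \dist(x, z) + \dist(z, q) \le \dist(x,z) + \dist(z,b) = \dist(x,b)$, where the middle inequality uses that $z$ is in the cell of $q$ so $q$ is the closest box to $z$ (hence at least as close as $b$), and the equality uses that $z$ lies on the segment $xb$. Since we assumed no voter is equidistant from two boxes, and $q \in Q \subset S_1 \cup F_1$, this contradicts $b$ being the \emph{unique} closest box unless $b = q$, which is impossible as $b \notin Q$. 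Hence $b \in S_1 \cup F_1$.

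Having established the claim, I would then assemble the verification that $S$ satisfies Definition~\ref{def:valid} for $R$. First, by the claim, the set of voters of $V_1$ voting in any box $b \in S_1 \cup F_1$ is exactly the same in the election over $S \cup F$ as in the election over $S_1 \cup F_1$ (and symmetrically for $V_2$); moreover no voter of $V_1$ votes in a box of $B_2 \setminus Q$ and vice versa, while voters of both parts may vote in boxes of $Q$. For a box $q \in Q \setminus F$ the voters voting there in the $S \cup F$ election are precisely the union of those from $V_1$ voting there in the $R_1$ election and those from $V_2$ voting there in the $R_2$ election; by the way we guessed the split of $\votes$ at $q$, these match $\votes_1(q,\cdot)$ and $\votes_2(q,\cdot)$ respectively, so $p$ wins at $q$ iff it was counted in $w$. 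For boxes $f \in F \setminus Q$, the count is transferred directly and agrees with $v(f,\cdot)$. Counting districts won by $p$: it wins in at least $\ell_1$ boxes of $S_1$, at least $\ell_2$ boxes of $S_2$, and in exactly $w$ boxes of $Q \setminus F$, and since $\ell = \ell_1 + \ell_2 + w$ by our guess, $p$ wins in at least $\ell$ boxes of $S$. This handles the forward direction of correctness; the reverse direction (if $R$ is valid then some choice of $\Gamma$ leads the algorithm to report it valid) is where Lemma~\ref{lemma:noose} is invoked to guarantee that the true balanced separator of the Voronoi diagram of an actual solution $S^\star \cup F$ is among the enumerated $\Gamma$'s, together with the correct guesses of $k_1, k_2, \ell_1, \ell_2$ and the vote splits.

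The main obstacle I expect is the careful bookkeeping around the boxes of $Q$: they belong to $F_1$ and to $F_2$ simultaneously, so voters are split across the two recursive subproblems precisely at these boundary cells, and one must be sure (i) that the Voronoi cell of $q \in Q$ in the global diagram of $S \cup F$ restricted to the inside of $\Gamma$ coincides with its cell in the $S_1 \cup F_1$ diagram restricted to the inside, which is exactly what the $\Delta \cup \Gamma$ bookkeeping in the state enforces, and (ii) that the guessed splits $\votes_1(q,\cdot) + \votes_2(q,\cdot)$ are consistent, i.e.\ each voter is counted on exactly the side it geometrically belongs to. A secondary subtlety is verifying the new boundary condition: each segment of $\Delta_i \cup \Gamma$ must lie inside the cell of its designated box in the diagram of $S_i \cup F_i$ --- for the $\Gamma$ segments this is immediate from how $\Gamma$ was chosen relative to $S \cup F$ combined with Claim~\ref{claim:closest_preserved} (the cell of $q$ inside $\Gamma$ is unchanged), and for the inherited $\Delta_i$ segments it follows because the set of boxes that could "steal" a point of $\delta$ from $\delta_F$ only shrinks when we pass to $S_i \cup F_i$. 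Finally one checks the running-time recurrence $T(k) = |B|^{\OhOp{\sqrt k}} \cdot T(\tfrac{3}{4}k)$ together with the $(m+n)^{\OhOp{\sqrt k}}$ overhead per node (including the $(|V|+1)^{|\pi(C)| \cdot \OhOp{\sqrt k}}$ factor for guessing vote splits, which is polynomial in $m+n$ raised to $\OhOp{\sqrt k}$ since $|C|$ is constant), which solves to $(m+n)^{\OhOp{\sqrt k}}$ as claimed.
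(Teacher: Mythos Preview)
Your argument for the claim has the right geometric idea (straight-line segment from $x$ to $b$, crossing point $z$ on $\Gamma$, triangle inequality), and it matches the paper's approach. But there is a gap in the justification of the key inequality $\dist(z,q)\le\dist(z,b)$. You write that the segment of $\Gamma$ containing $z$ ``by construction lies inside the Voronoi cell of some box $q\in Q$'', and then use that ``$q$ is the closest box to $z$''. Closest among which set of boxes? If you mean the Voronoi diagram of $S\cup F$, that is precisely what is not yet known at this point of the proof: in the forward direction the algorithm has merely enumerated some polygon $\Gamma$ of the right combinatorial shape, and nothing about its construction guarantees that its segments sit inside faces of the diagram of $S\cup F$.

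The paper closes this gap by invoking the inductive hypothesis. Since $R_2$ is valid (already established by induction), the $\Delta$-condition in Definition~\ref{def:valid} for $R_2$ says that each segment of $\Gamma$ lies in the cell of its $Q$-endpoint in the Voronoi diagram of $S_2\cup F_2$. Hence $\dist(z,q)\le\dist(z,b')$ for every $b'\in S_2\cup F_2$, and in particular for $b$, which you assumed lies in $B_2\setminus Q\subset S_2\cup F_2$. That is the missing ingredient: the comparison with $b$ goes through the diagram of $S_2\cup F_2$, not the global diagram of $S\cup F$. Once you make this explicit, your proof of the claim coincides with the paper's. Your surrounding discussion of the bookkeeping for $Q$-boxes, the boundary conditions on $\Delta_i\cup\Gamma$, and the running time is accurate and aligns with the paper.
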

\begin{proof}
    Let $i = 1$, the other case is analogous. Assume the contrary, there is a voter $x \in V_1$ such that there is a box $b \in S_2 \cup F_2 \setminus Q$ which is strictly closer to $x$ than any box in $S_1 \cup F_1$, since two boxes cannot be at the same distance to $x$. Consider a straight line segment from $x$ to $b$, since $x$ and $b$ lie on the different sides of $\Gamma$,
    the segment crosses $\Gamma$ at least once, denote any intersection point by $y$.
    Since $R_2$ is valid, the segments of $\Gamma$ lie completely inside the cells of $Q$ in the Voronoi diagram of $S_2 \cup F_2$.
    That means that there is $q \in Q$ such that $y$ is at least as close to $q$ as to any other box in $S_2 \cup F_2$, including $b$. But then $\dist(x, q) \le \dist(x, y) + \dist(y, q) \le \dist(x, y) + \dist(y, b)=\dist(x,b)$, where the first inequality is the triangle inequality, and this contradicts the initial assumption.
\end{proof}

By Claim~\ref{claim:closest_preserved}, for each $i \in \{1, 2\}$ and for each box in $S_i$ exactly the same set of voters votes there in the election with boxes $S \cup F$ compared to the election with boxes $S_i \cup F_i$. It means that among the boxes in $S_1$ and $S_2$ the target party $p$ wins in exactly $\ell_1 + \ell_2$ of them. For each $i \in \{1, 2\}$ and each box $f$ in $F_i \setminus Q$ it also holds that exactly the same set of voters votes in $f$ in the election with boxes $S \cup F$ as votes in $f$ in the election with boxes $S_i \cup F_i$.
Since $v(f, \sigma) = v_i(f, \sigma)$ for any $\sigma \in \pi(C)$, the vote distribution on $F_i \setminus Q$ is as desired. For each box $q \in Q$ and each $\sigma \in \pi(C)$ by construction
$v(q, \sigma) = v_1(q, \sigma) + v_2(q, \sigma)$. Since voters from $V_1$ and $V_2$ who vote in $q$ are exactly preserved, $v(q, \sigma)$ is indeed equal to the number of voters with preference list $\sigma$ who vote in $q$. This finishes the proof that $R$ is valid and $S$ satisfies Definition~\ref{def:valid}.

Finally, we show that for each $\delta \in \Delta$, $\delta$ lies inside the cell of $\delta_F$ in the Voronoi diagram of $S \cup F$. If $\delta_F \in Q$, since $R_1$ and $R_2$ are valid, $\delta$ lies inside the cell of $\delta_F$ in the Voronoi diagram of $S_i \cup F_i$ for $i \in \{1, 2\}$, then no other point in $S \cup F$ is closer to each point on $\delta$ than $\delta_F$. If $\delta_F \notin Q$, $\delta$ is completely inside or outside of $\Gamma$, and the same argument as in Claim~\ref{claim:closest_preserved} shows that for any point on $\delta$ the closest box is preserved, then it has to be $\delta_F$, since $\delta$ is in $\Delta_i$ for some $i \in \{1, 2\}$, and $R_i$ is valid.

Towards the other direction, assume that $R$ is valid, we show that the algorithm correctly reports the validity of $R$. Consider a particular $S$ from Definition~\ref{def:valid}, by Lemma~\ref{lemma:noose} for the Voronoi diagram of $S \cup F$, there exists a polygon $\Gamma$ of length $O(\sqrt{k})$ which is a balanced separator for $S \cup F$.
Since the algorithm tries all polygons of this form, it will try $\Gamma$ as well, or report that $R$ is valid earlier. When considering the polygon $\Gamma$, the algorithm will guess the right values of $k_1$, $k_2$, $\ell_1$, $\ell_2$, and $v_1$, $v_2$ on the boxes of $Q \setminus F$, according to the elections on $S \cup F$.
The new states $R_1$ and $R_2$ are valid since the elections on $S \cap F$ exactly induce conditions on $R_i$, and there exists a valid selection of boxes $S_i = (S \cup F) \cap B_i \setminus Q$, for each $i \in \{1, 2\}$.  Therefore the recursive call will find $R_i$ valid by induction. Thus the proof of correctness is finished.

\noindent\textbf{Running time analysis.} Denote by $T(k)$ the worst-case running time of our algorithm where $k$ is the respective parameter of the input state. If $k \le \gamma$, the algorithm does a brute force in polynomial time so $T(k) = (n + m)^{\OhOp{1}}$. If $k > \gamma$, we try at most $m^{4\alpha\sqrt{k}}$ polygons $\Gamma$, for each of them we in time at most $k \ell n^{2\alpha \sqrt{k}}$ guess the parameters of the new two instances $k_1$, $\ell_1$, $\votes_1$ and $k_2$, $\ell_2$, $\votes_2$.
We run our algorithm on the two instances recursively, and in both instances the value of the parameter is bounded by $\frac{3}{4}k$, so we get the following recurrence relation,
\[\begin{cases}
    T(k) \le (n + m)^{\beta \sqrt{k}} T(\frac{3}{4} k), \quad k > \gamma\\
    T(k) \le (n + m)^\beta, \quad k \le \gamma,
\end{cases}\]
where $\beta$ is some constant. Now, by applying the first upper bound until the expression under $T$ is at most $\gamma$ we get
\begin{multline*}
    T(k) \le (n + m)^{\beta \sqrt{k} (1 + q + q^2 + \ldots + q^t)} \\\le (n + m)^{\beta \sqrt{k} /(1 - q)} = (n + m)^{\OhOp{\sqrt{k}}},
\end{multline*}
where $q = \sqrt{3/4}$ and $t = \lceil \log_{4/3} (k / \gamma) \rceil$.

\end{proof}

\section{Concluding Remarks}

We initiated study of a gerrymandering problem from the perspective of fine-grained and parameterized complexity. Our main result is asymptotically tight algorithm with respect to parameter $k$, the number of opened ballot boxes, together with a matching lower bound. 

For future work, it would be interesting to investigate the complexity of finding an approximate solution for the problem. That is the question whether there is an efficient (polynomial time) algorithm that either finds a set of $k$ ballot boxes such that $p$ wins $\frac{\ell}{c}$ districts, for some constant $c$, or correctly decides that in no elections with $k$ districts the candidate $p$ wins $\ell$ of them. Another interesting question is what happens, when we introduce obstacles, such as lakes or hills, in the instance so that the distances are not anymore Euclidean distances in the plane. Can we still get better than trivial $m^{k}$ algorithm in this case? Finally, it is also natural to impose restrictions on sizes of districts, so they are not too disproportional. 

\section{Acknowledgments}

This work is supported by the Research Council of Norway
via the project ``MULTIVAL''.

\bibliography{references}
\bibliographystyle{plainurl}
\end{document}